\newcommand{\mathunderbar}[1]{\ensuremath{\underline{\smash{#1}}}}
\title{\LARGE \bf
A Scenario-Based Approach for Stochastic Economic Model Predictive Control with an Expected Shortfall Constraint
}
\author{Alireza Arastou, Algo Car\`e, Ye Wang, Marco Campi, and Erik Weyer % <-this % stops a space
\thanks{Algo Carè and Marco Campi were partly supported by the PRIN 2022 project 2022RRNAEX ``The Scenario Approach for Control and Non-Convex Design'' (CUP: D53D23001440006), funded by the NextGeneration EU program (Mission 4, Component 2, Investment 1.1), and by the PRIN PNRR project  P2022NB77E “A data-driven cooperative framework for the management of distributed energy and water resources” (CUP: D53D23016100001), funded by the NextGeneration EU program (Mission 4, Component 2, Investment 1.1).} 
\thanks{}
\thanks{Alireza Arastou, and Erik Weyer are with the
Department of Electrical and Electronic Engineering, The
University of Melbourne, Parkville, Melbourne, Victoria, 3010,
Australia, (E-mails: aarastou@student.unimelb.edu.au, 
ewey@unimelb.edu.au)
}%
\thanks{Algo Carè and Marco Campi are with the Department of Information Engineering, University of Brescia, 25123 Brescia,
Italy (E-mails: algo.care@unibs.it; marco.campi@unibs.it).}
\thanks{Ye Wang is with the School of Mathematics and Statistics, The University of Melbourne, Parkville, Melbourne, Victoria, 3010,
Australia, (E-mail: ye.wang@unimelb.edu.au)}
}
\newtheorem{lemma}{\textbf{Lemma}}
\newtheorem{assumption}{\textbf{Assumption}}
\newtheorem{theorem}{\textbf{Theorem}}
\newtheorem{corollary}{\textbf{Corollary}}
\begin{document}

\maketitle
\thispagestyle{empty}
\pagestyle{empty}

%%%%%%%%%%%%%%%%%%%%%%%%%%%%%%%%%%%%%%%%%%%%%%%%%%%%%%%%%%%%%%%%%%%%%%%%%%%%%%%%
\begin{abstract}

This paper presents a novel approach to stochastic economic model predictive control (SEMPC) that minimizes average economic cost while satisfying an empirical expected shortfall (EES) constraint to manage risk. A new scenario-based problem formulation ensuring controlled risk with high confidence while minimizing the average cost is introduced. The probabilistic guarantees is dependent on the number of support elements over the entire input domain, which is difficult to find for high-dimensional systems. A heuristic algorithm is proposed to find the number of support elements. Finally, an efficient method is presented to reduce the computational complexity of the SEMPC problem with an EES constraint. The approach is validated on a water distribution network, showing its effectiveness in balancing performance and risk.

\end{abstract}

%%%%%%%%%%%%%%%%%%%%%%%%%%%%%%%%%%%%%%%%%%%%%%%%%%%%%%%%%%%%%%%%%%%%%%%%%%%%%%%%
\section{INTRODUCTION} \label{section I}

Model predictive control (MPC) is a powerful control approach for a wide range of systems with constraints. The closed-loop properties of deterministic MPC are well-established \cite{rawlings2017model} making it a natural choice for many applications, such as water distribution networks \cite{wang2021minimizing} and traffic systems \cite{kim2014mpc}. In these applications, the control objective can be related to the economic cost incurred during the plant operation. This case is called economic MPC (EMPC) \cite{AngeliEMPC}.

In practice, knowledge about the system is limited or some parameters are uncertain, introducing randomness in the control problem. Thus, the results from a deterministic MPC might not be reliable \cite{HEIRUNG2018158}. One method to cope with bounded uncertainties is to use a robust MPC strategy \cite{bemporad2007robust,rawlings2017model}. This method can maintain stability and performance under worst-case scenarios. However, it can lead to conservative solutions. This is undesirable in applications where a degree of constraint violation is acceptable or the control objective concerns an economic aspect. 
Stochastic MPC (SMPC) addresses the above issues and uses a probabilistic objective function, e.g., expected value of the control objective. Moreover, hard constraints are replaced with stochastic chance constraints to allow a degree of constraint violation \cite{farina2016stochastic, HEIRUNG2018158}. Uncertainties and disturbances can be unbounded in this case. An SMPC with chance constraints is usually not computationally tractable. Two methods can be used to solve an SMPC problem: analytical approaches and sample-based (scenario-based) methods \cite{farina2016stochastic}. An analytical approach makes use of a priori knowledge about the probability distribution of uncertainties to find a deterministic equivalent of the SMPC problem \cite{HEIRUNG2018158}. A scenario-based method uses samples of the uncertain elements as approximations of their distributions and reformulates the SMPC problem \cite{campi2019scenario}. Bounds are given on the number of samples required such that the solution of a scenario-based MPC meets chance constraints with a desired confidence level for a new scenario \cite{campi2019scenario}.

The focus of this paper is on the scenario-based approach. However, different from the standard scenario theory, we will minimize an average economic cost subject to a constraint on the risk of incurring a very large economic cost. Various risk measures have been developed to quantify potential losses under uncertainty \cite{artzner1999coherent}. Among these, Value-at-Risk (VaR), Conditional Value-at-Risk (CVaR), and Expected Shortfall (ES) are widely used due to their intuitive interpretation and mathematical tractability \cite{rockafellar2000optimization}.

In this paper, a stochastic EMPC (SEMPC) problem with an empirical expected shortfall (EES) constraint is considered. The solution to the SEMPC does not lend itself to be effectively studied by a direct application of the results in \cite{campi2018wait,campi2008exact,campi2019scenario, garatti2022risk}. Moreover, EES is considered as a constraint; thus, the suggested method in \cite{arici2021theory,RAMPONI20181003} cannot be used. The main contributions of the paper are \textit{(i) Probabilistic guarantees on the economic cost for a next unseen scenario, (ii) A heuristic algorithm to find the number of support elements within the feasibility region of the SEMPC problem, (iii) A method to reduce the computational complexity of solving the scenario-based SEMPC.}
% \begin{itemize}
%     \item Probabilistic guarantees on the economic cost for a next unseen scenario,
%     \item A heuristic algorithm to find the number of support elements within the feasibility region of the SEMPC problem,
%     \item A method to reduce the computational complexity of solving the scenario-based SEMPC.
% \end{itemize}

The proposed approach is applied to the Richmond water distribution network and the results are discussed. The rest of the paper is organized as follows: the problem statement is given in Section II together with a summary of the scenario approach. Section III gives the probabilistic guarantees. A heuristic algorithm to find the number of support elements within the whole input domain is given in Section IV. An efficient method to reduce computational complexity of solving the obtained control problem is presented in Section V. Simulation results are given in Section VI and conclusions and discussions are given in Section VII.  

% \textit{Notation}: We denote $k|t$ as $k$-th time step along the prediction horizon, starting at time $t$. $\delta^i$ denotes $i$-th scenario of $\delta$. $\{u_i\}_{i=1}^N$ denotes $\{u_1,\ldots,u_N\}$. The $n$ by $n$ identity matrix is denoted by $I_n$. $\mathbb{E}\{A\}$ is the expected value of $A$. $\lfloor N \rfloor$ is the largest integer $\leq N$.

\section{PROBLEM STATEMENT}

A stochastic optimization problem is formulated such that an average cost is minimized subject to a constraint on EES. The problem formulation is practically useful in applications such as water distribution networks, where balancing the risk of occasional high costs due to high electricity prices against low average costs is important.  

\subsection{Motivating Example}

Consider an EMPC problem for minimization of pumping energy cost in a water distribution network \cite{wang2021minimizing}:
\begin{subequations}\label{eq: EMPC motivating ex}
\begin{align}
&\hspace{-2 cm}\min_{u_{0|t},\ldots, u_{N-1|t}}  \sum_{k=0}^{N-1} \alpha_{k|t}^{\top}u_{k|t},\label{eq: Electricity cost for the motivating ex}\hspace{0 cm} +\Delta u_{l|t}^{\top}R\Delta u_{l|t}+V^f(x_{N|t}) \nonumber \\
 \hspace{0 cm}\text{subject to}& \hspace{0.3 cm} \forall k \in \mathbb{I}_{[0:N-1]},  \nonumber\\
& x_{0|t}=x_0,\\
& x_{k+1|t}=Ax_{k|t}+B_{u}u_{k|t}+B_{d}d_{k|t},\\
& x_{k|t} \in \mathcal{X}, \hspace{0.5 cm} u_{k|t} \in \mathcal{U}, \hspace{0.3 cm} x_{N|t}\in \mathcal{X}^f, \label{eq: state, input, and terminal sets}
\end{align}
\end{subequations}
where $x_{k|t} \in \mathbb{R}^{n \times 1}$ are states representing water levels in tanks at time $k+t$ with an initial time $t$, $u_{k|t} \in \mathbb{R}^{m\times 1}$ are water flows through pumps, $d_{k|t} \in \mathbb{R}^{v\times 1}$ are water demands, and $\alpha_{k|t} \in \mathbb{R}^{m \times 1}$ is the vector of electricity prices. The objective function includes the energy cost $\alpha_{k|t}^{\top}u_{k|t}$ which is subject to stochastic fluctuations and $\Delta u_{k|t}=u_{k|t}-u_{k-1|t}$ emphasizing input smoothness. $\mathcal{X}^f$ and $V^f(x_{N|t})$ represent terminal constraint and cost. State and input constraints are given by $\mathcal{X}$ and $\mathcal{U}$. $A\in \mathbb{R}^{n \times n}$, $B_{u} \in \mathbb{R}^{n \times m}$, $B_{d} \in \mathbb{R}^{n \times v}$ are known matrices.  

Denote the cost function in \eqref{eq: Electricity cost for the motivating ex} by $J(\tilde{\alpha}_{t},\tilde{u}_{t})$, where $\tilde{\alpha}_{t}=[\alpha_{0|t}^{\top}, \ldots, \alpha_{N-1|t}^{\top}]^{\top}$ and $\tilde{u}_{t}=[u_{0|t}^{\top}, \ldots, u_{N|t}^{\top}]^{\top}$. In some cases water companies buy electricity directly from an electricity market and $\tilde{\alpha}_t$ and the cost in \eqref{eq: EMPC motivating ex} are stochastic. 
% Assuming $\tilde{\alpha}_t$ belongs to a probability space of $(\Delta,\mathcal{F},\mathbb{P})$, one way to address this problem is considering a worst case approach 
% \begin{equation}
%     \min_{\tilde{u}_t}  \max_{\tilde{\alpha}_{t} \in \Delta} J(\tilde{\alpha}_{t},\tilde{u}_{t}), \label{eq: min-max motivating ex}
% \end{equation}
% as the cost function in \eqref{eq: Electricity cost for the motivating ex}. However, a worst-case approach can lead to a very conservative solution, which is contradictory to 
The operational objective is to minimize pumping cost in the long run and the criterion
\begin{equation}
    \min_{\tilde{u}_t}  \mathbb{E}\left\{ J(\tilde{\alpha}_{t},\tilde{u}_{t})\right\} \label{eq: expected motivating ex}
\end{equation}
is used. A shortcoming of \eqref{eq: expected motivating ex} is that very high energy prices which can happen with a small probability may occasionally lead to unacceptably high operating cost.  

\subsection{Risk Measures}\label{sec:2.B}

Risk measures, such as value at risk (VaR) and expected shortfall (ES), are statistical tools used to evaluate risk \cite{artzner1999coherent}. Let $L(u,\delta)$ be a random variable where $u$ is the decision variable and $\delta$ is the uncertainty. Let $F_{L,u}$ be cumulative distribution function (CDF) of $L(u,\cdot)$. Given a $ \zeta \in (0,1) $, VaR and ES at level $\zeta$ are given by
\begin{subequations}
    \begin{align}
        & \text{VaR}_{\zeta}(L_u) = \min \{ l : F_{L,u}(l) \geq \zeta \} \label{eq: VaR definition},\\
        & \text{ES}_{\zeta}(L_u) = \mathbb{E} \{ L_u : L_u \geq \text{VaR}_{\zeta}(L_u) \} \label{eq: ES definition}.
    \end{align}
\end{subequations} 

The distribution function $F_{L,u}$ is required to compute the risk measures in \eqref{eq: VaR definition}-\eqref{eq: ES definition}; however, it is not known in many applications. Therefore, the empirical versions of \eqref{eq: VaR definition}-\eqref{eq: ES definition} using samples (scenarios) of the random variable is used. In this paper, we focus on EES as suggested in \cite{arici2021theory,RAMPONI20181003}. Given $N_s$ independent realizations of the random variable $(\delta^1,\ldots,\delta^{N_s})$, we define $L_i(u):=L(u,\delta^i)$. For a given $u \in \mathcal{U}$, $L_{(i)}(u), i=1,\ldots,N_s$ are the loss functions, $L_i$, in descending order \cite{RAMPONI20181003}
\begin{equation} \label{eq: loss functions in descending order}
    L_{(1)}(u)\geq L_{(2)}(u) \geq \ldots \geq L_{(N_s)}(u),
\end{equation}

EES at level $1-\frac{k}{N_s}$ is the average of $k$-largest realizations,
\begin{equation} \label{eq: EES definition}
    \text{EES}_{1-\frac{k}{N_s}}(L(u))=\frac{1}{k} \sum_{i=1}^k L_{(i)}(u).
\end{equation}

It is desirable to limit the EES to control the effect of worst case scenarios. SEMPC in \eqref{eq: EMPC motivating ex}-\eqref{eq: expected motivating ex} is hence combined with EES in \eqref{eq: EES definition} as a constraint in a scenario optimization problem. Let $\tilde{\alpha}^i_{t}$ be $N_s$ independently drawn scenarios of the vector of electricity prices over the prediction horizon $N$. The following problem is considered:
\begin{subequations}\label{eq: SEMPC with EES}
\begin{align}
&\min_{u_{0|t},\ldots, u_{N-1|t}} \frac{1}{N_s}\sum_{i=1}^{N_s} (\sum_{l=0}^{N-1} (\alpha_{l|t}^i)^\top u_{l|t}  \label{eq: expectd cost in SEMPC} \\
&\hspace{2 cm}+\Delta u_{l|t}^{\top}R\Delta u_{l|t}+V^f(x_{N|t})) \nonumber\\
 \hspace{0 cm}\text{subject to}& \hspace{0.3 cm} \forall l \in \mathbb{I}_{[0:N-1]},  \nonumber\\
& x_{0|t}=x_0,\\
& x_{l+1|t}=Ax_{l|t}+B_{u}u_{l|t}+B_{d}d_{l|t},\\
& x_{l|t} \in \mathcal{X}, \hspace{0.5 cm} u_{l|t} \in \mathcal{U},\hspace{0.2 cm} x_{N|t}\in \mathcal{X}^f, \label{eq: terminal set}\\
& \frac{1}{k} \sum_{j=0}^k \left(\sum_{l=0}^{N-1} (\alpha_{l|t}^{i_j})^\top u_{l|t}\right)\leq M, \hspace{0.1 cm} \label{eq: EES constraint in SEMPC}\\
& \text{for any choice of} \hspace{0.1 cm} \{i_1,\ldots,i_k\} \subseteq \{1,\ldots,N_s\} \nonumber,
\end{align}
\end{subequations}
where \eqref{eq: EES constraint in SEMPC} specifies an upper bound of $M$ on EES. $\{i_1,\ldots,i_k\}$ in \eqref{eq: EES constraint in SEMPC} is any subset of $\{1,\ldots,N_s\}$ with cardinality $k$; however, only the constraint \eqref{eq: EES constraint in SEMPC} with indices corresponding to the $k$-largest scenarios is active. 

A natural question is whether the obtained solution is reliable given an unseen scenario. Fundamentals of the scenario approach and the challenges of using this method for the problem in \eqref{eq: SEMPC with EES} are given in the next section.

\subsection{Scenario Theory for General Decision Making \cite{garatti2022risk}} \label{sec: decision-making scenario}
Let $\Phi_m$ be a map from a set of scenarios $(\delta^1,\ldots,\delta^{m})$ to a decision $z \in \mathcal{Z}$, where $\mathcal{Z}$ is a generic decision set. For every $\delta$ there is a set $\mathcal{Z}_{\delta}$ of allowed decisions. Assume that $\Phi_m$ has the following properties \cite{campi2021scenarioReview,garatti2022risk}:

\noindent \begin{assumption} \label{As: Map properties}
(Mapping properties)

    \begin{itemize}
        \item \textbf{Permutation invariance}: For a permutation of $(\delta^1,\ldots,\delta^m)$ denoted by $(\delta^{i_1},\ldots,\delta^{i_m})$, we have $\Phi_m(\delta^1,\ldots,\delta^m)=\Phi_m(\delta^{i_1},\ldots,\delta^{i_m})$.
        \item \textbf{Stability in the case of confirmation}: For any integer $n$, if the set of scenarios given by $(\delta^1,\ldots,\delta^m,\delta^{m+1},\ldots,\delta^{m+n})$ is such that 
        \begin{equation}
            \Phi_m(\delta^1,\ldots,\delta^m) \in \mathcal{Z}_{\delta^{m+i}}, \hspace{0.3 cm} \forall i \in\{1,\ldots,n\}, \nonumber
        \end{equation}
        then $\Phi_{m+n}(\delta^1,\ldots,\delta^{m+n})=\Phi_m(\delta^1,\ldots,\delta^m)$.

    \item \textbf{Responsiveness to contradiction}: Let $(\delta^1,\ldots,\delta^m,\delta^{m+1},\ldots,\delta^{m+n})$ be a set of scenarios where $n$ is an integer, such that 
        \begin{equation}
            \exists i \in\{1,\ldots,n\}: \Phi_m(\delta^1,\ldots,\delta^m) \notin \mathcal{Z}_{\delta^{m+i}},  \nonumber
        \end{equation}
        then $\Phi_{m+n}(\delta^1,\ldots,\delta^{m+n})\neq \Phi_m(\delta^1,\ldots,\delta^m)$.
    \end{itemize}
\end{assumption}

Risk is defined as $V(z)=\mathbb{P}\{\delta \in \Delta: z \notin \mathcal{Z}_{\delta}\}$. Also, $\mathcal{Z}_{\delta^i}$ is called a support element if $\Phi_{m-1}(\delta^1,\ldots,\delta^{i-1},\delta^{i+1},\ldots,\delta^m) \neq \Phi_{m}(\delta^1,\ldots,\delta^{m})$ \cite{garatti2022risk}. The following non-degeneracy assumption is in place. 
\vspace{0.1 cm}
\begin{assumption} \label{As: Non-degeneracy assumption}
    Almost surely $\Phi_m(\delta^1,\ldots,\delta^m)$ coincides with the obtained decision after eliminating all elements that are not support elements. 
\end{assumption}

The following theorem provides a probabilistic certificate for the risk at $z^*_m=\Phi_m(\delta^1,\ldots,\delta^m)$.

\vspace{0.2 cm}

\begin{theorem}[\cite{garatti2022risk}] \label{Theorem: Risk guarantee theorem}
Assume that Assumptions \ref{As: Map properties} and \ref{As: Non-degeneracy assumption} hold. Let $\beta \in (0,1)$ be a confidence parameter. For each $k = 0, 1, \dots, m-1$, consider the polynomial equation
\begin{equation} \label{eq: binom equation 1}
\binom{m}{k} t^{m-k} - \frac{\beta}{2m} \sum_{i=k}^{m-1} \binom{i}{k} t^{i-k} - \frac{\beta}{6m} \sum_{i=m+1}^{4m} \binom{i}{k} t^{i-k} = 0,
\end{equation}
which has exactly two solutions in $[0, +\infty)$, denoted by $\mathunderbar{t}(k)$ and $\overline{t}(k)$, with $\mathunderbar{t}(k) \leq \overline{t}(k)$.  For $k = m$, consider the polynomial equation
\begin{equation} \label{eq: binom equation 2}
1 - \frac{\beta}{6m} \sum_{i=m+1}^{4m} \binom{i}{k} t^{i-m} = 0,
\end{equation}
which has a unique solution $\overline{t}(m)$. Let $\mathunderbar{t}(m) = 0$. Set
$\mathunderbar{\epsilon}(k) := \max\{0, 1 - \overline{t}(k)\},$ and $ \overline{\epsilon}(k) := 1 - \mathunderbar{t}(k), \hspace{0.1 cm} k = 0, 1, \dots, m$. Let $s^*_m$ be the number of support elements at $z_m^*$. Under Assumptions \ref{As: Map properties} and \ref{As: Non-degeneracy assumption}, it holds that
\begin{equation} \label{eq: lower and upperbounds for risk}
    \mathbb{P}^m \left\{ \mathunderbar{\epsilon}(s^*_m) \leq V(z^*_m) \leq \overline{\epsilon}(s^*_m) \right\} \geq 1 - \beta.
\end{equation}
\end{theorem}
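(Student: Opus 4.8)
The plan is to reduce the two-sided certificate \eqref{eq: lower and upperbounds for risk} to a single distributional statement about the joint law of the pair $(s^*_m, V(z^*_m))$, and then read off the interval endpoints from that law. Concretely, I would first prove a \emph{fundamental relation} that, for every $k$ and every $\epsilon$, sandwiches the joint probability $\mathbb{P}^m\{s^*_m = k,\, V(z^*_m) > \epsilon\}$ by quantities depending only on $m$, $k$, and $\epsilon$ — independently of the map $\Phi_m$, the decision set $\mathcal{Z}$, and the distribution of $\delta$. The three properties in Assumption~\ref{As: Map properties} are exactly what make such a universal relation possible, while the non-degeneracy in Assumption~\ref{As: Non-degeneracy assumption} is what makes $s^*_m$ behave as a genuine, well-defined complexity count.

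The key step is this fundamental relation, which I would establish by a two-directional argument built on the structural properties. Permutation invariance reduces every statement to a count over subsets, supplying the $\binom{\cdot}{k}$ factors. In one direction I would \emph{enlarge} the sample by drawing fictitious scenarios $\delta^{m+1}, \delta^{m+2}, \ldots$: by Stability in the case of confirmation the decision $z^*_m$ persists exactly when every added scenario is satisfied, an event of conditional probability $(1-V)$ per scenario, so that powers $(1-V)^{i-m}$ — equivalently $t^{i-m}$ after the substitution $t = 1 - \epsilon$ — enter and generate the high-order block $\sum_{i=m+1}^{4m}\binom{i}{k} t^{i-k}$, with Responsiveness to contradiction guaranteeing that a violating scenario genuinely forces a change of decision and is thus correctly registered. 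In the other direction I would \emph{restrict} to subsamples of sizes $k \le i < m$, producing the low-order block $\sum_{i=k}^{m-1}\binom{i}{k} t^{i-k}$. Integrating these relations against the law of $V$ yields, for each $k$, a sandwiching of $\mathbb{P}^m\{s^*_m = k,\, V \in \cdot\}$ by the two blocks, which is precisely the content feeding \eqref{eq: binom equation 1}.

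With the distributional characterization in hand, constructing the interval is a matter of allocating the total error budget $\beta$. I would split $\beta$ into a lower-tail and an upper-tail contribution and further distribute it across the $m+1$ possible values of $s^*_m$, which is the origin of the weights $\frac{\beta}{2m}$ and $\frac{\beta}{6m}$. The finite sum truncated at $4m$ replaces an infinite generating series, and I would bound the discarded tail $\sum_{i > 4m}$ so that it stays inside the remaining budget — the role of the smaller weight $\frac{\beta}{6m}$ on the high-order block. Setting the resulting bounding polynomial equal to its allotted error gives \eqref{eq: binom equation 1}, whose two nonnegative roots $\mathunderbar{t}(k) \le \overline{t}(k)$ define, through $\mathunderbar{\epsilon}(k) = \max\{0, 1 - \overline{t}(k)\}$ and $\overline{\epsilon}(k) = 1 - \mathunderbar{t}(k)$, the endpoints; the boundary case $k = m$ is handled separately by \eqref{eq: binom equation 2}. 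A union bound over $k$ then delivers \eqref{eq: lower and upperbounds for risk}.

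The main obstacle I anticipate is the fundamental relation itself: establishing the exact combinatorial link between the support count and the survival probability in the general, possibly non-convex, framework where $\Phi_m$ is constrained only by the three abstract properties rather than by convex-optimization structure. Verifying that the resulting bounds are tight enough that the truncation at $4m$ together with the two-way split of $\beta$ genuinely closes at level $1 - \beta$ — rather than merely furnishing a loose certificate — is the delicate accounting that I expect to consume most of the effort; by contrast, the union bound over $k$ and the root extraction from \eqref{eq: binom equation 1}--\eqref{eq: binom equation 2} should be comparatively routine once the relation and the error allocation are pinned down.
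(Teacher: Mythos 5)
First, note that the paper does not prove this theorem at all: it is quoted from \cite{garatti2022risk} and the in-paper ``proof'' is a one-line citation to that reference. So the relevant comparison is between your sketch and the argument in the cited source, not against anything carried out in the present manuscript.

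Your outline reproduces the architecture of that proof at the level of headlines: exchangeability supplying the binomial counts, sample enlargement together with the stability/responsiveness properties generating the high-order block $\sum_{i=m+1}^{4m}\binom{i}{k}t^{i-k}$, subsampling generating the low-order block $\sum_{i=k}^{m-1}\binom{i}{k}t^{i-k}$, a $\beta$-budget split across the $m+1$ values of $s^*_m$ and the two tails, and a final union bound. But the proposal does not prove the theorem, and you concede as much: the ``fundamental relation'' sandwiching the joint law of $(s^*_m, V(z^*_m))$ is where essentially all of the content lives, and it is asserted rather than derived. Concretely: (i) the claim that this sandwich holds universally over all maps $\Phi_m$ and all distributions of $\delta$ requires the full machinery of \cite{garatti2022risk} --- conditioning on the multiset of scenarios and showing that Assumption \ref{As: Non-degeneracy assumption} forces the conditional law of $V(z^*_m)$ given $s^*_m=k$ to satisfy specific moment inequalities; the three properties in Assumption \ref{As: Map properties} do not hand you this for free, and your heuristic that each added scenario is ``satisfied with conditional probability $1-V$'' glosses over exactly the dependence structure that makes the result nontrivial. (ii) The existence and ordering of exactly two nonnegative roots of \eqref{eq: binom equation 1} and the uniqueness of the root of \eqref{eq: binom equation 2} are themselves claims requiring a sign and convexity analysis of the polynomials, not routine root extraction. (iii) The truncation at $4m$ with weight $\beta/(6m)$ is not a free budgeting choice to be checked afterwards; it must be shown compatible with the moment conditions so that the confidence actually closes at $1-\beta$, and your sketch provides no mechanism for that verification. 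As a proof the submission is therefore incomplete; as a plan it points in the right direction, but the difficult part is entirely ahead, and for the purposes of this paper the correct move is the one the authors made: defer to \cite{garatti2022risk}.
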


\vspace{0.2 cm}

\begin{proof}
The proof is given in \cite{garatti2022risk}.    
\end{proof}

In our case we are interested in a bound on the energy cost $\sum_{l=0}^{N-1}\alpha_{l|t}^{\top}u_{l|t}$ for a new unseen scenario of the electricity prices (the real cost that will be incurred). In particularly, we would like this cost to be less than $M$. Bounds of this type was given in \cite{RAMPONI20181003} for the solution $u^*_{l|t}$ which minimized the EES on the seen $N_s$ scenarios. However, this theory is not applicable to the situation at hand since the solution here is obtained by minimizing a different objective function and the EES is acting as a constraint only.
\section{Probabilistic Guarantees} \label{sec: guarantees for the whole region}

To circumvent the above problem, probabilistic certificates for the EES for the whole input domain are found. If the cost associated with a new scenario is not among the largest $k$ costs for any $u$ in the input domain, EES will not change for any input including the solution of \eqref{eq: SEMPC with EES}. Therefore, \eqref{eq: EES constraint in SEMPC} is met given the added scenario. Hence, we are interested in finding guarantees for the situation that the cost associated with a new scenario is not among the largest $k$ costs for any input and hence EES is unchanged. The probability of violation for a new scenario is the probability that the new scenario results in a cost larger than the $k$-largest cost for at least one input. In the following, we will find a certificate for this probability of violation.  
% Given $N_s$ scenarios of the uncertainty, the EES is computed using \eqref{eq: EES definition} for any $\tilde{u} \in \mathcal{U}^N$. If a new scenario is added and it is not among top $k$ costs within the whole input domain, $\text{EES}_{\alpha}(L(\tilde{u}))$ will be unchanged. It is notable that the solution of \eqref{eq: SEMPC with EES} lies in $\mathcal{U}^N$. Thus, the obtained guarantees are valid for the unique solution of \eqref{eq: SEMPC with EES}.

% To this end, the following feasibility problem is considered. The problem finds all points $(\tilde{u},l)$ such that for a given $\tilde{u}$, $l$ is less than the $k$-th largest cost for that value of $\tilde{u}$.  
% \begin{equation} \label{eq: feasibility problem}
% \begin{split}
%     &\text{Find} \hspace{0.2 cm} (\tilde{u},l)\\
%     \text{subject to}&\\
%     &l\leq L_{(k)}(\tilde{u})\\
%     &\tilde{u} \in \mathcal{U}^N,
% \end{split}
% \end{equation}
% where $L_{(k)}(\tilde{u})$ denotes $k$-largest cost among $\{L_1(\tilde{u}),\ldots,L_{N_s}(\tilde{u})\}$ scenarios for any $\tilde{u} \in \mathcal{U}^N$. Denote the solution of \eqref{eq: feasibility problem} by $D$.
Denote the region below $k$-largest cost by $D$, i.e.,
\begin{equation} \label{eq: feasibility problem}
    D=\{(\tilde{u},l) \in \mathcal{U}^N \times \mathbb{R} \hspace{0.0 cm}:\hspace{0.1 cm}\tilde{u} \in \mathcal{U}^N, \hspace{0.1 cm} 0\leq l \leq L_{(k)}(\tilde{u})\}.
\end{equation}

An example for a scalar $u$ with $\mathcal{U}=\{u \hspace{0.1 cm} : \hspace{0.1 cm} \mathunderbar{u} \leq u\leq \Bar{u}\}$, $N_s=4$, and $k=2$ is shown in Fig. \ref{fig: example of feasibility problem}.

\begin{figure}[t]
    \centering
    \includegraphics[scale=0.3]{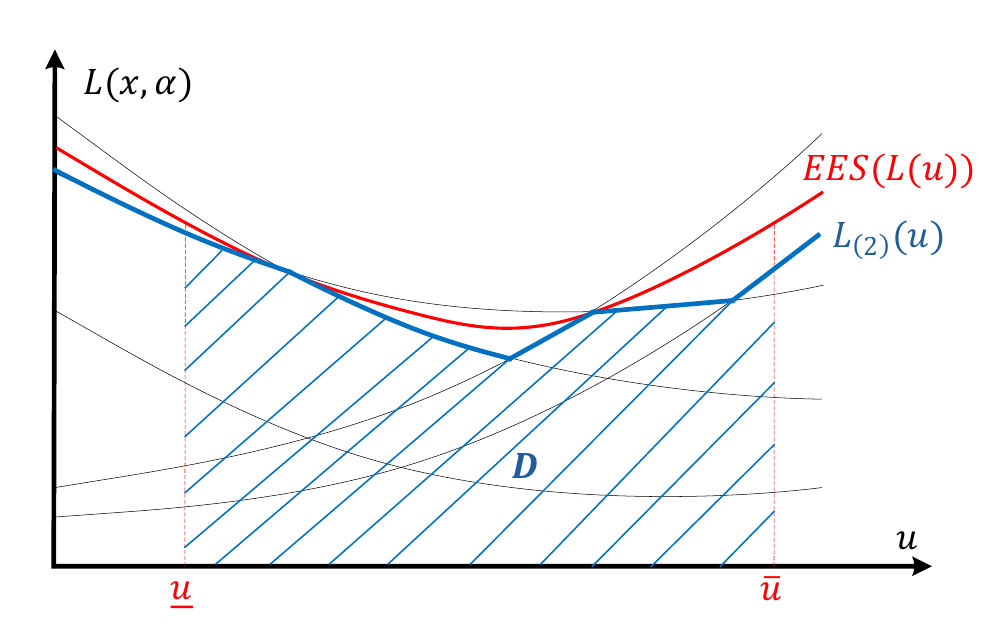}
    \caption{Region $D$ in \eqref{eq: feasibility problem} with $N_s=4$, $k=2$. The blue line is the second-largest cost, while gray lines are $L_i(u)$, $i=1,2,3,4$.}
    \label{fig: example of feasibility problem}
\end{figure}

If adding a new scenario to \eqref{eq: feasibility problem} does not change the set $D$, then the added scenario was not among the largest $k$ costs for any $\tilde{u} \in \mathcal{U}^N$. Therefore, $\text{EES}_{1-\frac{k}{N_s}}(L(\tilde{u}))$ is unchanged for the whole input domain. The probability of violation for the set $D$ is defined as 
\begin{equation} \label{eq: probability of violation in words}
    \mathbb{P}\{\tilde{\alpha} \in \Delta \hspace{0.0 cm}:\hspace{0.0 cm}\text{$\exists \tilde{u} \in \mathcal{U}^N$ such that $(\tilde{u},L(\tilde{u},\tilde{\alpha})) \notin D$}\}
\end{equation}

$D=\Phi_m(\tilde{\alpha}^1,\ldots,\tilde{\alpha}^m)$ can be viewed as a decision in the framework of Section \ref{sec: decision-making scenario} and the corresponding theory can be applied to obtain the probabilistic guarantees. Formally, in the context of EES, we can define the sets
% $\mathcal{Z}$ and $\mathcal{Z}_{\tilde{\alpha}}$ can be defined as 
\begin{subequations} \label{eq: definitions of mapping sets}
    \begin{align}
        & \mathcal{Z}=\{\bar{\mathcal{Z}} \hspace{0.0 cm}:\hspace{0.0 cm} \bar{\mathcal{Z}} \subseteq \mathcal{U}^N\times\mathbb{R}^+\},\\
        % & \mathcal{Z}_{\tilde{\alpha}}=\left\{\mathcal{S} \in \mathcal{Z} \hspace{0.0 cm}:\hspace{0.0 cm} L_{\mathcal{S}}^{\tilde{\alpha}} \cap \mathcal{S}=L_{\mathcal{S}}^{\tilde{\alpha}}\right\} \label{eq: Z_d definition},
        & {\cal Z}_{\tilde{\alpha}}:=\{  {\cal S} \in {\cal Z} : \forall \tilde{u}\in \Pi_{\mathcal{U}^N}(\mathcal{S}), \hspace{0.1cm}(\tilde{u}, L(\tilde{u},\tilde{\alpha}))\in {\cal S}\}
    \end{align}
\end{subequations}
% where $L_{\mathcal{S}}^{\tilde{\alpha}}$ is the curve $a=L(\tilde{u},\tilde{\alpha})$ with $\tilde{u}$ restricted to the domain of $\mathcal{S}$, i.e., 
% \begin{equation}
%     L_{\mathcal{S}}^{\tilde{\alpha}}=\{(\tilde{u},a)\hspace{0.0 cm}:\hspace{0.0 cm} \tilde{u} \in \Pi_{\mathcal{U}^N}(\mathcal{S}), a=L(\tilde{u},\tilde{\alpha})\}, 
% \end{equation}
where $\Pi_{\mathcal{U}^N}(\mathcal{S})$ is the projection of $\mathcal{S}$ on $\mathcal{U}^N$. 
For the decision $D$, the probability of violation is 
\begin{equation} \label{eq: probability of violation feasibility problem}
    V(D)=\mathbb{P}\{\tilde{\alpha} \in \Delta: D \notin \mathcal{Z}_{\tilde{\alpha}}\}.
\end{equation}

% For a scenario $\tilde{\alpha} \in \Delta$, $\mathcal{Z}_{\tilde{\alpha}}$ is a set of sets such that every set $\mathcal{S}$ in $\mathcal{Z}_{\tilde{\alpha}}$ contains $L(\tilde{u},\tilde{\alpha})$ for all $\tilde{u} \in \Pi_{\mathcal{U}^N}(S)$. 

If the corresponding cost of a scenario is above $D$ for at least one $\tilde{u} \in \mathcal{U}^N$, then $D \notin \mathcal{Z}_{\tilde{\alpha}}$. The following theorem provides guarantees for $V(D)$ and consequently for the solution of \eqref{eq: SEMPC with EES}.
% Thus, \eqref{eq: definitions of mapping sets}-\eqref{eq: probability of violation feasibility problem} give a mathematical representation of risk given in \eqref{eq: probability of violation in words}.  

% \vspace{0.2 cm}

% \begin{proposition} \label{prop: Meeting map properties}
%     The mapping $D=\Phi_m(\tilde{\alpha}^1,\ldots,\tilde{\alpha}^m)$ given by \eqref{eq: feasibility problem} meets  the conditions in Assumption \ref{As: Map properties}.
% \end{proposition}

% \begin{proof}
%     \begin{itemize}
%     \item Changing the order of the scenarios does not change the solution of \eqref{eq: feasibility problem}; thus, the mapping is permutation invariant. 
    
%     \item Stability in the case of confirmation: if we add new scenarios and the solution of \eqref{eq: feasibility problem} does not change, the added scenarios were not among top $k$ costs. Thus, considering all scenarios together, the solution remains unchanged. Therefore, stability in the case of confirmation holds.
    
%     \item Responsiveness to contradiction: if a $\tilde{\alpha} \in \Delta$ is added and $D \notin \mathcal{Z}_{\tilde{\alpha}}$, the added scenario was among top $k$ scenarios for at least one $\tilde{u}\in\mathcal{U}^N$. Therefore, considering all of the scenarios will change the solution of \eqref{eq: feasibility problem}. Thus, this property holds. 
% \end{itemize}
% \end{proof}

\vspace{0.1 cm}

\begin{theorem} \label{theorem: main theorem}
    Assume Assumption \ref{As: Non-degeneracy assumption} holds. Given a set of scenarios $(\tilde{\alpha}^1,\ldots,\tilde{\alpha}^{N_s})$, let the solution of \eqref{eq: feasibility problem} be given by $D$ and denote the number of support elements within the whole input domain by $s^*_{N_s}$, then, it holds that
    \begin{equation} \label{eq: lower and upperbounds for risk of feasibility problem}
    \mathbb{P}^{N_s} \left\{ \mathunderbar{\epsilon}(s^*_{N_s}) \leq V(D) \leq \overline{\epsilon}(s^*_{N_s}) \right\} \geq 1 - \beta, 
\end{equation}
where $\beta \in (0, 1)$ and $\mathunderbar{\epsilon}(s^*_{N_s})$ and $\overline{\epsilon}(s^*_{N_s})$ are computed from \eqref{eq: binom equation 1}-\eqref{eq: binom equation 2}. 
\end{theorem}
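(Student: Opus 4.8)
The plan is to obtain \eqref{eq: lower and upperbounds for risk of feasibility problem} as a direct instantiation of Theorem \ref{Theorem: Risk guarantee theorem}, applied to the map $\Phi_{N_s}(\tilde{\alpha}^1,\ldots,\tilde{\alpha}^{N_s}) = D$ with $m = N_s$, decision $z^*_m = D$, decision set $\mathcal{Z}$ and allowed-decision sets $\mathcal{Z}_{\tilde{\alpha}}$ as in \eqref{eq: definitions of mapping sets}, and support count $s^*_m = s^*_{N_s}$. Since Assumption \ref{As: Non-degeneracy assumption} is assumed in the statement and the bound \eqref{eq: lower and upperbounds for risk} holds for any map satisfying Assumptions \ref{As: Map properties}--\ref{As: Non-degeneracy assumption}, the whole content of the proof reduces to checking that this specific $\Phi_{N_s}$ satisfies the three mapping properties of Assumption \ref{As: Map properties}. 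Once these are in place, \eqref{eq: lower and upperbounds for risk of feasibility problem} follows verbatim from \eqref{eq: lower and upperbounds for risk}, with $\mathunderbar{\epsilon}$ and $\overline{\epsilon}$ computed from \eqref{eq: binom equation 1}--\eqref{eq: binom equation 2}.

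First I would check permutation invariance. The set $D$ in \eqref{eq: feasibility problem} is determined pointwise by $L_{(k)}(\tilde{u})$, the $k$-th largest among $L_1(\tilde{u}),\ldots,L_{N_s}(\tilde{u})$ ordered as in \eqref{eq: loss functions in descending order}. Because an order statistic does not depend on the labelling of the samples, $L_{(k)}(\tilde{u})$, and hence $D$, is unchanged under any permutation of $(\tilde{\alpha}^1,\ldots,\tilde{\alpha}^{N_s})$, so permutation invariance holds. For stability under confirmation I would exploit the fact that $D$ projects onto the entire input domain, $\Pi_{\mathcal{U}^N}(D) = \mathcal{U}^N$, so that the confirmation hypothesis $D \in \mathcal{Z}_{\tilde{\alpha}^{N_s+i}}$ is exactly the statement $L(\tilde{u},\tilde{\alpha}^{N_s+i}) \le L_{(k)}(\tilde{u})$ for every $\tilde{u}\in\mathcal{U}^N$ and every added scenario $i$. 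Fixing an arbitrary $\tilde{u}$, the newly added losses all lie weakly below the current $k$-th largest value, so the $k$ original largest losses remain the $k$ largest in the enlarged sample and the $k$-th order statistic is unchanged at $\tilde{u}$; as this holds for every $\tilde{u}$, the set $D$ is unchanged and $\Phi_{N_s+n} = \Phi_{N_s}$, as required.

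Finally comes responsiveness to contradiction, which I expect to be the main obstacle. A contradiction means that there are an added scenario and a point $\tilde{u}$ with $L(\tilde{u},\tilde{\alpha}^{N_s+i}) > L_{(k)}(\tilde{u})$, so that inserting this loss ought to push the $k$-th order statistic strictly up at $\tilde{u}$ and thereby enlarge $D$. The difficulty is that this strict increase can fail when several loss functions coincide at the level $L_{(k)}(\tilde{u})$: if strictly fewer than $k-1$ of the original samples exceed $L_{(k)}(\tilde{u})$ while a tie is attained at that level, inserting a larger value merely reshuffles the tie and leaves the $k$-th largest value — and hence $D$ — unchanged at $\tilde{u}$, which would contradict responsiveness. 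I would dispose of this case through the non-degeneracy Assumption \ref{As: Non-degeneracy assumption} (equivalently, a continuity assumption on the distribution of $\tilde{\alpha}$ ensuring that ties among the $L_i(\tilde{u})$ at the active order-statistic level occur with probability zero): under it the active constraint is attained without ties almost surely, so the $k$-th order statistic strictly increases at the offending $\tilde{u}$ and $D$ genuinely changes, which is all the probabilistic conclusion requires.

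With permutation invariance, stability under confirmation, and responsiveness to contradiction established almost surely, and with Assumption \ref{As: Non-degeneracy assumption} in force, Theorem \ref{Theorem: Risk guarantee theorem} applies directly to $z^*_{N_s} = D$, whose probability of violation $V(D)$ is given in \eqref{eq: probability of violation feasibility problem}, and the inequality \eqref{eq: lower and upperbounds for risk of feasibility problem} follows. The only genuinely technical point of the argument is the tie analysis in the responsiveness step; everything else is a mechanical translation of the general decision-making framework of Section \ref{sec: decision-making scenario} to the particular map producing $D$.
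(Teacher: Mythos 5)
Your proof takes essentially the same route as the paper's: the paper's entire argument is the one-line claim that the map producing $D$ "can be verified" to satisfy Assumption \ref{As: Map properties}, after which Theorem \ref{Theorem: Risk guarantee theorem} is invoked under Assumption \ref{As: Non-degeneracy assumption}. You simply carry out that verification explicitly (including the tie analysis in the responsiveness-to-contradiction step, which the paper does not discuss and which you correctly flag as the only delicate point), so your argument is a correct, more detailed version of the paper's proof.
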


\vspace{0.2 cm}

\begin{proof}
    It can be verified that the map in \eqref{eq: feasibility problem} meets the conditions in Assumption \ref{As: Map properties} . Under Assumption \ref{As: Non-degeneracy assumption}, Theorem \ref{Theorem: Risk guarantee theorem} holds and the certificate in \eqref{eq: lower and upperbounds for risk of feasibility problem} is established. 
\end{proof}

The corollary below shows that \eqref{eq: lower and upperbounds for risk of feasibility problem} can be used to find a certificate for EES at the solution of \eqref{eq: SEMPC with EES}.
\begin{corollary} \label{coro: Guarantee for solution of SEMPC}
    % Given a set of scenarios $(\tilde{\alpha}^1,\ldots,\tilde{\alpha}^{N_s})$, let the probability of violation for \eqref{eq: SEMPC with EES} be denoted by $\bar{V}(\cdot)$ and the solution of \eqref{eq: SEMPC with EES} be given by $\tilde{u}^*$, then it holds that 
Given an input $\tilde{u}
\in \mathcal{U}^N$, let
    $\bar{V}(\tilde{u})=\mathbb{P}\{\tilde{\alpha} \in \Delta \hspace{0.1 cm} : \hspace{0.1 cm} L(\tilde{\alpha},\tilde{u})>\text{EES}_{1-\frac{k}{N_s}}(\tilde{u})\}$. Also, let the solution of \eqref{eq: SEMPC with EES} using a set of scenarios $(\tilde{\alpha}^1,\ldots,\tilde{\alpha}^{N_s})$, be given by $\tilde{u}^*$. It holds that 
    \begin{equation}
        \mathbb{P}^{N_s} \left\{ \bar{V}(\tilde{u}^*) \leq \overline{\epsilon}(s^*_{N_s}) \right\} \geq 1-\beta.
    \end{equation}
\end{corollary}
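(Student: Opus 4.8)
The plan is to reduce the claim to Theorem~\ref{theorem: main theorem} by showing that, for every realization of the scenarios $(\tilde\alpha^1,\ldots,\tilde\alpha^{N_s})$, the quantity $\bar V(\tilde u^*)$ is dominated by $V(D)$. Since Theorem~\ref{theorem: main theorem} already certifies $V(D)\le\overline\epsilon(s^*_{N_s})$ with probability at least $1-\beta$, such a sample-path-wise domination transfers the upper bound directly to $\bar V(\tilde u^*)$.

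First I would establish the elementary inequality $\text{EES}_{1-\frac{k}{N_s}}(\tilde u)\ge L_{(k)}(\tilde u)$ for every $\tilde u\in\mathcal U^N$. This is immediate from \eqref{eq: EES definition}: the EES at level $1-\frac{k}{N_s}$ is the average of the $k$ largest losses $L_{(1)}(\tilde u)\ge\cdots\ge L_{(k)}(\tilde u)$, and an average of these terms cannot fall below their smallest member $L_{(k)}(\tilde u)$. Hence, for any fixed $\tilde u$, the event $\{\tilde\alpha:L(\tilde\alpha,\tilde u)>\text{EES}_{1-\frac{k}{N_s}}(\tilde u)\}$ is contained in $\{\tilde\alpha:L(\tilde\alpha,\tilde u)>L_{(k)}(\tilde u)\}$.

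Next I would connect this to the violation set of $D$. From \eqref{eq: feasibility problem} and \eqref{eq: definitions of mapping sets}, a new price vector $\tilde\alpha$ satisfies $D\notin\mathcal Z_{\tilde\alpha}$ exactly when some $\tilde u\in\mathcal U^N$ has $L(\tilde u,\tilde\alpha)>L_{(k)}(\tilde u)$; choosing $\tilde u=\tilde u^*$ gives $\{\tilde\alpha:L(\tilde\alpha,\tilde u^*)>L_{(k)}(\tilde u^*)\}\subseteq\{\tilde\alpha:D\notin\mathcal Z_{\tilde\alpha}\}$. Chaining the two inclusions and taking the probability over the fresh scenario yields $\bar V(\tilde u^*)\le V(D)$. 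The essential point---and the one deserving the most care---is that this inequality holds for every outcome of the original scenarios, since $\tilde u^*$, the curve $L_{(k)}(\cdot)$, and the region $D$ are all deterministic functions of that same sample $(\tilde\alpha^1,\ldots,\tilde\alpha^{N_s})$, while $\tilde\alpha$ is drawn independently; thus the domination is a statement about $\mathbb P$ (the new-scenario probability) that is valid pointwise in the $\mathbb P^{N_s}$ space.

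Finally I would combine this domination with Theorem~\ref{theorem: main theorem}. Discarding the lower bound in \eqref{eq: lower and upperbounds for risk of feasibility problem} gives $\mathbb P^{N_s}\{V(D)\le\overline\epsilon(s^*_{N_s})\}\ge 1-\beta$. Because $\bar V(\tilde u^*)\le V(D)$ holds on every sample path, the event $\{V(D)\le\overline\epsilon(s^*_{N_s})\}$ is contained in $\{\bar V(\tilde u^*)\le\overline\epsilon(s^*_{N_s})\}$, so the latter inherits probability at least $1-\beta$, which is the assertion of the corollary.
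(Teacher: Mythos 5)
Your proposal is correct and follows essentially the same route as the paper's own proof: the chain of event inclusions $\{L(\tilde{\alpha},\tilde{u}^*)>\text{EES}_{1-\frac{k}{N_s}}(\tilde{u}^*)\}\subseteq\{L(\tilde{\alpha},\tilde{u}^*)>L_{(k)}(\tilde{u}^*)\}\subseteq\{\exists\tilde{u}: L(\tilde{\alpha},\tilde{u})>L_{(k)}(\tilde{u})\}$ yielding the pointwise domination $\bar{V}(\tilde{u}^*)\leq V(D)$, followed by transferring the upper bound from Theorem~\ref{theorem: main theorem}. The only difference is that you make explicit two steps the paper leaves implicit, namely the elementary bound $\text{EES}_{1-\frac{k}{N_s}}(\tilde{u})\geq L_{(k)}(\tilde{u})$ and the sample-path-wise nature of the domination, which is a welcome clarification rather than a new argument.
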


\vspace{0.2 cm}

\begin{proof} 
We know that 
\begin{equation}\nonumber
\begin{split}
    & \bar{V}(\tilde{u}^*) \leq \mathbb{P}\{\tilde{\alpha} \in \Delta \hspace{0.0 cm}:\hspace{0.0 cm} L(\tilde{\alpha},\tilde{u}^*)>L_{(k)}(\tilde{u}^*)\}\\
    &\leq \mathbb{P}\{\tilde{\alpha} \in \Delta \hspace{0.0 cm}:\hspace{0.0 cm} \exists \tilde{u} \in \mathcal{U}^N \text{s.t.}\hspace{0.1 cm} L(\tilde{\alpha},\tilde{u})>L_{(k)}(\tilde{u})\}=V(D)
\end{split}
\end{equation}
It can be concluded that
\begin{equation} \label{eq: guarantee for u^*}
\begin{split}
    & \mathbb{P}^{N_s} \left\{ \bar{V}(\tilde{u}^*) \leq \overline{\epsilon}(s^*_{N_s}) \right\} \geq \mathbb{P}^{N_s} \left\{ V(D) \leq \overline{\epsilon}(s^*_{N_s}) \right\}\\
    & \geq \mathbb{P}^{N_s} \left\{ \mathunderbar{\epsilon}(s^*_{N_s}) \leq V(D) \leq \overline{\epsilon}(s^*_{N_s}) \right\} \geq 1 - \beta
\end{split}
\end{equation}
\end{proof}
% Note that the above guarantee is applicable to the solution of \eqref{eq: SEMPC with EES}. \eqref{eq: lower and upperbounds for risk of feasibility problem} guarantees the risk that a new scenario is among top $k$ scenarios within the whole input domain is bounded by $\mathunderbar{\epsilon}(s^*_{N_s})$ and $\overline{\epsilon}(s^*_{N_s})$ with a confidence level of $1-\beta$. Accordingly, it can be  

\section{Finding the Number of Support Elements} \label{sec: finding support elements}

% Although a probabilistic certificate for the solution of SEMPC problem with an EES constraint was obtained in the previous section, there are some practical challenges that makes evaluation of probabilistic certificate and solving the control problem cumbersome. This section is dedicated to suggesting methods to address these issues. 

% \subsection{Finding the number of support elements for the whole input domain}
 In \eqref{eq: feasibility problem}, the support elements are the scenarios that are among the largest $k$ costs for at least one $\tilde{u} \in \mathcal{U}^N$. Thus, one method to find support elements is to evaluate costs for the whole input domain, order them, and find the largest $k$ costs for every input. However, this method is not computationally feasible for high dimensional systems. In this section, we propose a sample-based method to find the number of support elements for the whole input domain. 

The idea is to draw i.i.d samples from the input domain, find the number of support elements using the obtained samples and test the obtained solution on additional samples of the input. If the obtained solution passes the test, the algorithm will output the obtained number of support elements. Otherwise, new scenarios are added to the set of support elements. The proposed method is given in Algorithm \ref{alg: support elements algorithm}.  

$\hat{p}=(\text{number of support elements}) / N_T$ is the empirical probability of finding a new support element when tested against $N_T$ new input samples. Let $\mu-\rho$ be a desired upper bound on the probability of finding a new support element. Thus, it is desirable to set a very small value for $\mu-\rho \in [0,1]$, so that the probability that we get a new number of support elements for new samples of input be negligible. The break criterion in Algorithm \ref{alg: support elements algorithm} is motivated by the following result, which also provides a guideline for how $N_T$ can be chosen. 
\begin{lemma} \label{lem: number of testing samples}
 Choose a distribution on ${\mathcal{U}}^N$, from which $\tilde{ u}$ is independently sampled. Let $A \subset \{ \tilde{\alpha}_1,\ldots,\tilde{\alpha}_{N_s}\}$ be a set of support elements. Let $p$ be the probability of drawing a $\tilde{u}$ such that there is an $\tilde{\alpha}^* \in \{\tilde{\alpha}_1,\ldots,\tilde{\alpha}_{N_s}\}\setminus A$ with the property that $(\tilde{\alpha}^*)^{\top}\tilde{u}$ is among the $k$-largest values of $(\tilde{\alpha}^i)^{\top}\tilde{u}$, $i=1,\ldots N_s$. Let $\hat{p}$ be the empirical frequency of this event evaluated on an i.i.d. sample $\tilde{u}_1,\ldots,\tilde{u}_{N_T}$. If $N_T$ satisfies

\begin{equation} \label{eq: number of testing samples from input}
      \sum_{i=0}^{\lfloor N_T(\mu - \rho) \rfloor} \binom{N_T}{i} \mu^{i} (1 - \mu)^{N_T - i} < \bar{\beta},
\end{equation}
it holds that $P^{N_T}\{\hat{p}\leq \mu-\rho \hspace{0.1 cm} \text{and}\hspace{0.1 cm} p>\mu\}<\bar{\beta}$, where $P^{N_T}$ is the probability measure on the input samples.
\end{lemma}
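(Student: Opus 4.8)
The plan is to condition on the fixed scenario set $(\tilde{\alpha}_1,\ldots,\tilde{\alpha}_{N_s})$ and on the fixed support-element set $A$, so that $p$ becomes a deterministic (possibly unknown) number and the only remaining randomness lies in the i.i.d.\ input samples $\tilde{u}_1,\ldots,\tilde{u}_{N_T}$. For each $\tilde{u}_j$, let $X_j$ be the indicator of the event that some $\tilde{\alpha}^* \in \{\tilde{\alpha}_1,\ldots,\tilde{\alpha}_{N_s}\}\setminus A$ has $(\tilde{\alpha}^*)^{\top}\tilde{u}_j$ among the $k$-largest of $(\tilde{\alpha}^i)^{\top}\tilde{u}_j$, $i=1,\ldots,N_s$. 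By the i.i.d.\ sampling of the inputs, $X_1,\ldots,X_{N_T}$ are i.i.d.\ Bernoulli variables with success probability $p$, and $\hat{p}=\tfrac{1}{N_T}\sum_{j=1}^{N_T}X_j$, so that $N_T\hat{p}$ follows a Binomial$(N_T,p)$ distribution.

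First I would dispose of the trivial case. Since $p$ is deterministic, the event $\{p>\mu\}$ is either the whole probability space or empty. If $p\leq\mu$, the joint event $\{\hat{p}\leq\mu-\rho \text{ and } p>\mu\}$ is empty and its probability is $0<\bar{\beta}$, so the claim holds. It therefore remains to treat $p>\mu$, in which case $\{p>\mu\}$ is the entire space and the target probability reduces to $P^{N_T}\{\hat{p}\leq\mu-\rho\}$. Because $\sum_{j=1}^{N_T}X_j$ is integer-valued, $\hat{p}\leq\mu-\rho$ is equivalent to $\sum_{j=1}^{N_T}X_j\leq\lfloor N_T(\mu-\rho)\rfloor$, whence
\begin{equation}
P^{N_T}\{\hat{p}\leq\mu-\rho\}=\sum_{i=0}^{\lfloor N_T(\mu-\rho)\rfloor}\binom{N_T}{i}p^{i}(1-p)^{N_T-i}.
\end{equation}

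The key step --- and the only nontrivial one --- is the monotonicity of the binomial cumulative distribution function in its success probability: for fixed sample size $N_T$ and fixed threshold $m$, the map $\theta\mapsto\sum_{i=0}^{m}\binom{N_T}{i}\theta^{i}(1-\theta)^{N_T-i}$ is non-increasing on $[0,1]$. I would justify this by a stochastic-dominance argument (a Binomial$(N_T,\theta)$ variable is stochastically increasing in $\theta$, so its lower tail is decreasing), or equivalently by differentiating the sum with respect to $\theta$ and checking that the resulting telescoping expression is non-positive. Applying this with $m=\lfloor N_T(\mu-\rho)\rfloor$ together with $p>\mu$ gives
\begin{equation}
\sum_{i=0}^{\lfloor N_T(\mu-\rho)\rfloor}\binom{N_T}{i}p^{i}(1-p)^{N_T-i}\leq\sum_{i=0}^{\lfloor N_T(\mu-\rho)\rfloor}\binom{N_T}{i}\mu^{i}(1-\mu)^{N_T-i}.
\end{equation}

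The right-hand side is strictly less than $\bar{\beta}$ by the hypothesis \eqref{eq: number of testing samples from input}, so in the case $p>\mu$ we obtain $P^{N_T}\{\hat{p}\leq\mu-\rho\}<\bar{\beta}$. Combining the two cases yields $P^{N_T}\{\hat{p}\leq\mu-\rho \text{ and } p>\mu\}<\bar{\beta}$, as required. I expect the monotonicity of the binomial lower tail in the success probability to be the main (and essentially the sole) conceptual obstacle; everything else is bookkeeping once one recognizes that, conditional on the scenarios and on $A$, the indicators $X_j$ form a Bernoulli sample with a deterministic success probability $p$.
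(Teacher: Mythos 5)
Your proposal is correct and is essentially the argument the paper relies on: the paper's proof simply defers to Lemma 1 of the cited reference, whose proof is exactly this reduction to a Binomial$(N_T,p)$ lower tail followed by monotonicity of that tail in the success probability, evaluated at the boundary value $\mu$. Your handling of the two cases ($p\leq\mu$ trivially, $p>\mu$ via the tail bound) and the integer-valuedness step matching $\hat{p}\leq\mu-\rho$ to the floor in \eqref{eq: number of testing samples from input} are both sound.
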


\begin{proof}
    The proof is similar to that of Lemma 1 in \cite{nasir2018scenario}. 
\end{proof}

The bound in Lemma \ref{lem: number of testing samples} applies to the procedure run at a single iteration of the while loop for one instance of ${\cal I}_k$. However, note that during an execution of Algorithm \ref{alg: support elements algorithm} several versions of ${\cal I}_k$ can be tested and the returned ${\cal I}_k$ depends on the outcomes of a number of tests (an {\em a priori} unknown quantity); therefore, the bound in Lemma \ref{lem: number of testing samples} is not directly applicable to the set of support elements ${\cal I}_k$ returned by the algorithm. 
 
\color{black}

The input domain is restricted by the constraints in \eqref{eq: SEMPC with EES}. Thus, if the input samples are drawn from the feasibility region of SEMPC problem, the number of support elements may be reduced in comparison with the candidate number of support elements obtained by sampling from the whole input domain, i.e., $\mathcal{U}^N$ and we can get tighter guarantees from \eqref{eq: lower and upperbounds for risk of feasibility problem}.

Let us denote the restricted input domain from \eqref{eq: state, input, and terminal sets} by $\mathcal{P}$. Given an initial state $x_0$ and a steady-state $x^s$, assume in \eqref{eq: state, input, and terminal sets} that $\mathcal{X}=\{x\hspace{0.0 cm}:\hspace{0.0 cm}\mathunderbar{x}\leq x \leq \overline{x}\}$, $\mathcal{U}=\{u\hspace{0.0 cm}:\hspace{0.0 cm}\mathunderbar{u}\leq u \leq \overline{u}\}$, and $\mathcal{X}^f=\{x \hspace{0.0 cm}:\hspace{0.0 cm} (x-x^s)^{\top} \Omega (x-x^s)\leq \kappa \}$, where $\Omega$ is a positive definite matrix and $\kappa >0$. $\mathcal{P}$ can be represented by
\begin{equation} \label{eq: feasible set}
   \begin{split}
      & \tilde{u} \in \mathcal{U}^N, \hspace{0.1 cm} \bar{B} \tilde{u} \leq \bar{A},\hspace{0.1 cm} (\hat{B} \tilde{u} + \gamma)^{\top} \Omega (\hat{B} \tilde{u} + \gamma) \leq \kappa,
   \end{split}
\end{equation}
where $\bar{A}, \bar{B}$, $\hat{B}$, and $\gamma$ are matrices and vectors with appropriate dimensions that include model dynamics, $x_0$, and $x^s$.

An optimization-based method is proposed in this part to find support elements within $\mathcal{P}$. Let us assume $\mathcal{I}_k$ contains the indices of candidate support elements within the box of input constraint, i.e., $\mathcal{U}=\{u\hspace{0.1 cm}:\hspace{0.1 cm}\mathunderbar{u}\leq u \leq \overline{u}\}$. For an index $j \in \mathcal{I}_k$, if there exists a $\tilde{u} \in \mathcal{P}$, such that $L(\tilde{u},\tilde{\alpha}^j)$ is among the largest $k$ values, $j$ is a support element in $\mathcal{P}$. Otherwise, we can remove $j$ from $\mathcal{I}_k$. The following program is solved for every $j \in \mathcal{I}_k$
\begin{subequations} \label{eq: Finding support in P}
\begin{align}
    &\max_{\tilde{u},\{z_i\}} \;  \sum_{i=1,i\neq j}^{N_s} z_i, \label{eq: objective function z_i} \\
    \text{subject to:} \hspace{0.2 cm} &i=1,\ldots,N_s, \hspace{0.1 cm} i\neq j, \nonumber\\
    & \tilde{u} \in \mathcal{P},\hspace{0.2 cm} (\tilde{\alpha}^j)^{\top} \tilde{u} \geq (\tilde{\alpha}^i)^{\top} \tilde{u} - G (1 - z_i), \label{eq: verifying that j is above i}\\
    & \sum_{i=1}^{N_s} z_i \geq N - k, \label{eq: sum of z_i},\hspace{0.2 cm} z_i \in \{0, 1\},
\end{align}
\end{subequations}
\begin{algorithm}[h]
\caption{Finding the number of support elements}\label{alg: support elements algorithm}
\begin{algorithmic}[1]
\State Inputs: $(\tilde{\alpha}^1,\ldots,\tilde{\alpha}^{N_s})$, $N_r$, $N_T$, $k$, $\mu$, $\rho$;
\State Generate $N_r$ i.i.d samples according to a uniform distribution on $\mathcal{U}^N$, i.e., $(\tilde{u}_1,\ldots,\tilde{u}_{N_r})$;
\State Evaluate $L(\tilde{u}_j,\tilde{\alpha}^i)$ for all $i=1,\ldots,N_s$ and $j=1,\ldots,N_r$;
\State Find the indices of the scenarios that are among the largest $k$ costs and store them in a set $\mathcal{I}_k$;  
\While{$|\mathcal{I}_k|\leq N_s$}
\State Generate $N_T$ new input samples according to a uniform distribution on $\mathcal{U}^N$ for testing $\mathcal{I}_k$, i.e., $(\tilde{u}_1,\ldots,\tilde{u}_{N_T})$;
\For{each $\tilde{u}_l$, $l=1,\ldots,N_T$}
\State Find the $k$ scenarios with largest values and store them in $\bar{\mathcal{I}}_l$;
\State Define $B_{l} = \begin{cases}
0, & \bar{\mathcal{I}}_l \subset \mathcal{I}_k, \\
1, & \text{otherwise.}
\end{cases}$
\State If $B_l = 1$, add the new support elements to $\mathcal{I}_k$;
\EndFor
\State Compute $\hat{p}=\frac{1}{N_T}\sum_{l=1}^{N_T} B_l$;
\If {$\hat{p}\leq \mu-\rho$}
\State break;
\EndIf
\EndWhile
\State Output: A set of support elements $\mathcal{I}_k$. 
\end{algorithmic}
\end{algorithm}where $G$ is a positive large number. If the above problem is feasible for a $j \in \mathcal{I}_k$, $j$ is a support element in $\mathcal{P}$.  
If $(\tilde{\alpha}^j)^{\top} \tilde{u} \geq (\tilde{\alpha}^i)^{\top} \tilde{u}$, \eqref{eq: verifying that j is above i} will be met with either $z_i=1$ or $0$. Since the objective function in \eqref{eq: objective function z_i} is maximized, then $z_i=1$ wherever $(\tilde{\alpha}^j)^{\top} \tilde{u} \geq (\tilde{\alpha}^i)^{\top} \tilde{u}$. If $(\tilde{\alpha}^j)^{\top} \tilde{u} < (\tilde{\alpha}^i)^{\top} \tilde{u}$, $z_i=0$ in \eqref{eq: verifying that j is above i}. Therefore, $\sum_{i=1,i\neq j}^{N_s} z_i$ is the number of scenarios that are below $j$-th scenario and it is required to be greater than $N-k$ in \eqref{eq: sum of z_i} to make sure the $j$-th scenario is among top $k$ scenarios. Feasibility of \eqref{eq: Finding support in P} implies that the scenario index by $j$ is of support. 

\section{Computationally Tractable SEMPC Problem}

The obtained problem in \eqref{eq: SEMPC with EES} is a convex quadratic program. However, considering every possible choice of $k$ indices from ${1,\ldots,N_s}$ makes the computational burden high. An equivalent computationally suitable formulation can be obtained by using \cite{OGRYCZAK2003117}
\begin{lemma} \label{lem: lemma for sum of k-largest costs}
    Given $\tilde{u} \in \mathcal{U}^N$ and a collection of scenarios $\{L_i(\tilde{u})\}_{i=1}^{N_s}$, the sum of $k$-largest functions is given by
    \begin{subequations} \label{eq: sum of k-largest functions equivalence}
        \begin{align}
            &\min_{\bar{t},\{\lambda_i\}_{i=1}^{N_s}} k\bar{t}+\sum_{i=1}^{N_s} \lambda_i\\
            \text{subject to}:& \hspace{0.1 cm} \lambda_i\geq L_i(\tilde{u})-\bar{t}, \hspace{0.2 cm} \lambda_i \geq 0, \hspace{0.1 cm} i=1,\ldots,N_s.
        \end{align}
    \end{subequations}
\end{lemma}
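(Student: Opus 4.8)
The plan is to fix $\tilde u$, treat the values $L_i := L_i(\tilde u)$ as given scalars, and solve the program in \eqref{eq: sum of k-largest functions equivalence} in two stages: first minimize over $\{\lambda_i\}$ with $\bar t$ held fixed, and then minimize the resulting univariate function over $\bar t$. For fixed $\bar t$, the two constraints on each $\lambda_i$ combine into $\lambda_i \ge \max\{0,\, L_i - \bar t\}$, and since each $\lambda_i$ enters the objective with unit positive coefficient and the coordinates are decoupled, the minimizer is $\lambda_i^\star = (L_i - \bar t)_+ := \max\{0,\, L_i - \bar t\}$. Substituting back reduces the program to the minimization over $\bar t \in \mathbb{R}$ of
\[
 g(\bar t) = k\,\bar t + \sum_{i=1}^{N_s} (L_i - \bar t)_+ ,
\]
which is a sum of convex piecewise-linear functions, hence itself convex and piecewise-linear with breakpoints at the $L_i$.

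The key step is to locate the minimizer of $g$. For $\bar t$ away from the breakpoints, the derivative is $g'(\bar t) = k - \lvert \{\, i : L_i > \bar t \,\}\rvert$, which is nondecreasing in $\bar t$, confirming convexity. Using the descending order \eqref{eq: loss functions in descending order}, for $\bar t$ just below $L_{(k)}$ at least $k$ of the $L_i$ exceed $\bar t$, so $g' \le 0$, whereas for $\bar t$ just above $L_{(k)}$ at most $k-1$ exceed it, so $g' \ge 1 > 0$. Hence $\bar t = L_{(k)}$ is a minimizer. Evaluating $g$ there, only the terms with $L_i > L_{(k)}$ survive, giving $g(L_{(k)}) = k\,L_{(k)} + \sum_{i:\,L_i > L_{(k)}} (L_i - L_{(k)})$, and a short computation collapses this to $\sum_{i=1}^{k} L_{(i)}(\tilde u)$, the sum of the $k$-largest functions. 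This establishes the claimed equality.

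The main obstacle I anticipate is the careful handling of ties, i.e.\ the case in which several $L_i$ coincide with $L_{(k)}$. Then the set of minimizers of $g$ is a flat interval rather than a single point, and the last grouping step must be done so as to count the correct multiplicity at the threshold; one checks that the optimal value of $g$ is constant along this interval and still equals $\sum_{i=1}^{k} L_{(i)}(\tilde u)$, so the conclusion is unaffected. A cleaner alternative, which I would mention as a cross-check, is to recognize \eqref{eq: sum of k-largest functions equivalence} as the linear-programming dual of $\max\{\sum_{i=1}^{N_s} w_i L_i : \sum_{i=1}^{N_s} w_i = k,\ 0 \le w_i \le 1\}$, whose feasible polytope has integral vertices that select the $k$ largest values; strong LP duality then yields the identity directly, and the tie issue is subsumed into the existence of fractional optimal weights that still attain the integral value.
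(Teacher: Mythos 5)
Your proposal is correct. Note, however, that the paper does not actually supply a proof of this lemma: it simply cites Ogryczak and Tamir, so there is no in-paper argument to compare against. What you have written is essentially a self-contained derivation of that cited result. Your two-stage partial minimization is sound: for fixed $\bar t$ the optimal $\lambda_i$ is $(L_i(\tilde u)-\bar t)_+$, the reduced objective $g(\bar t)=k\bar t+\sum_i (L_i(\tilde u)-\bar t)_+$ is convex piecewise-linear with subgradient $k-\lvert\{i:L_i>\bar t\}\rvert$, the sign change at $\bar t=L_{(k)}(\tilde u)$ certifies optimality there, and your evaluation $kL_{(k)}+\sum_{i:L_i>L_{(k)}}(L_i-L_{(k)})=\sum_{i=1}^k L_{(i)}$ handles ties correctly, since the $k-j$ copies of $L_{(k)}$ absorbed into the first term exactly account for the order statistics equal to the threshold. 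The LP-duality cross-check you sketch (dual of $\max\{\sum_i w_iL_i:\sum_i w_i=k,\ 0\le w_i\le 1\}$, whose vertices are $0$--$1$ selections of $k$ indices) is the route closest to the original reference and is a clean way to dispose of the tie issue in one stroke; the partial-minimization route has the side benefit of exhibiting the optimizer $\bar t^\star=L_{(k)}(\tilde u)$ explicitly, which connects the formulation to the VaR/ES interpretation used elsewhere in the paper. Either argument would serve as a complete proof where the paper offers only a citation.
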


\vspace{0.1 cm}

\begin{proof}
    See \cite{OGRYCZAK2003117}.
\end{proof}

In the SEMPC problem in \eqref{eq: SEMPC with EES}, \eqref{eq: EES constraint in SEMPC} can be replaced by
\begin{subequations}\label{eq: Equivalent of SEMPC with EES}
\begin{align}
& \lambda_j \geq \sum_{l=0}^{N-1} (\alpha_{l|t}^{j})^\top u_{l|t}-\bar{t}, \hspace{0.1 cm} \lambda_j\geq 0, \hspace{0.1 cm}, j =1,\ldots,N_s \label{eq: lambda for sum k inequality}\\
& \frac{1}{k}\left(k\bar{t}+\sum_{j=1}^{N_s} \lambda_j\right)\leq M \label{eq: equiavlent EES <M}.
\end{align}
\end{subequations}

\eqref{eq: lambda for sum k inequality}-\eqref{eq: equiavlent EES <M} reduces the computational burden compared to \eqref{eq: EES constraint in SEMPC}. $t$ and $\{\lambda_i\}_{i=1}^{N_s}$, which are obtained from \eqref{eq: Equivalent of SEMPC with EES} are not necessarily the minimizer of \eqref{eq: sum of k-largest functions equivalence}; thus, \eqref{eq: equiavlent EES <M} is an upper bound on the sum of the $k$-largest costs, i.e., EES. 
%%%%%%%%%%%%%%%%%%%%%%%%%%%%%%%%%%%%%%%%%%%%%%%%%%%%%%%%%%%%%%%%%%%%%%%%%%%%%%%%%%%%%%%%%%%%%%%%%%%%%%%%%%%%%%%%
\section{Case Study: Richmond Water Network}
The Richmond water network is part of the Yorkshire water supply area in U.K. \cite{exeter}. More details about the model of this network is given in \cite{wang2021minimizing,ArastouCDC2024}. The control objective is given in \eqref{eq: SEMPC with EES}. The random vectors $\alpha_{l|t}$ are the electricity price, and it is assumed that i.i.d samples of them over the time horizon are available. The samples are drawn from a sum of a deterministic cost with two tariffs and a uniform random variable. A realization of electricity prices is shown in Fig. \ref{fig: electricity tariff over time}. 

The prediction horizon was $N=30$ hours, $\mathcal{X}=\{x\hspace{0.1 cm}:\hspace{0.1 cm}\mathunderbar{x}\leq x \leq \overline{x}\}$, $\mathcal{U}=\{u\hspace{0.1 cm}:\hspace{0.1 cm}\mathunderbar{u}\leq u \leq \overline{u}\}$, $\mathcal{X}^f=\{x \hspace{0.1 cm}:\hspace{0.1 cm} (x-x^s)^{\top} \Omega (x-x^s)\leq \kappa \}$ with $x^s=0.5(\mathunderbar{x}+\overline{x})$, $\Omega=I_{n}$, and $\kappa=0.8$. The water demand at time $t$ is expressed as $d_t = m_t \bar{d}$, where $m_t$ represents the demand multiplier illustrated in Fig. \ref{fig: The demand multiplier}. This multiplier has an average of 1, meaning that the average demand is $\bar{d}$, with different $\bar{d}$ for each individual demand.
\begin{figure}[t]
    \centering
    % First subfigure
    \begin{subfigure}[h]{0.22\textwidth}
        \centering        \includegraphics[width=\linewidth,,height=1.8cm]{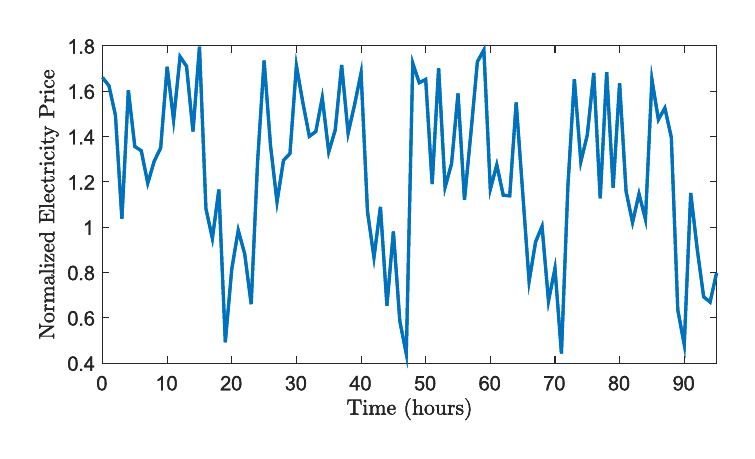}
        \caption{A realization of electricity prices}
        \label{fig: electricity tariff over time}
    \end{subfigure}
    % Second subfigure
    \begin{subfigure}[h]{0.22\textwidth}
        \centering
        \includegraphics[width=\linewidth,,height=1.9 cm]{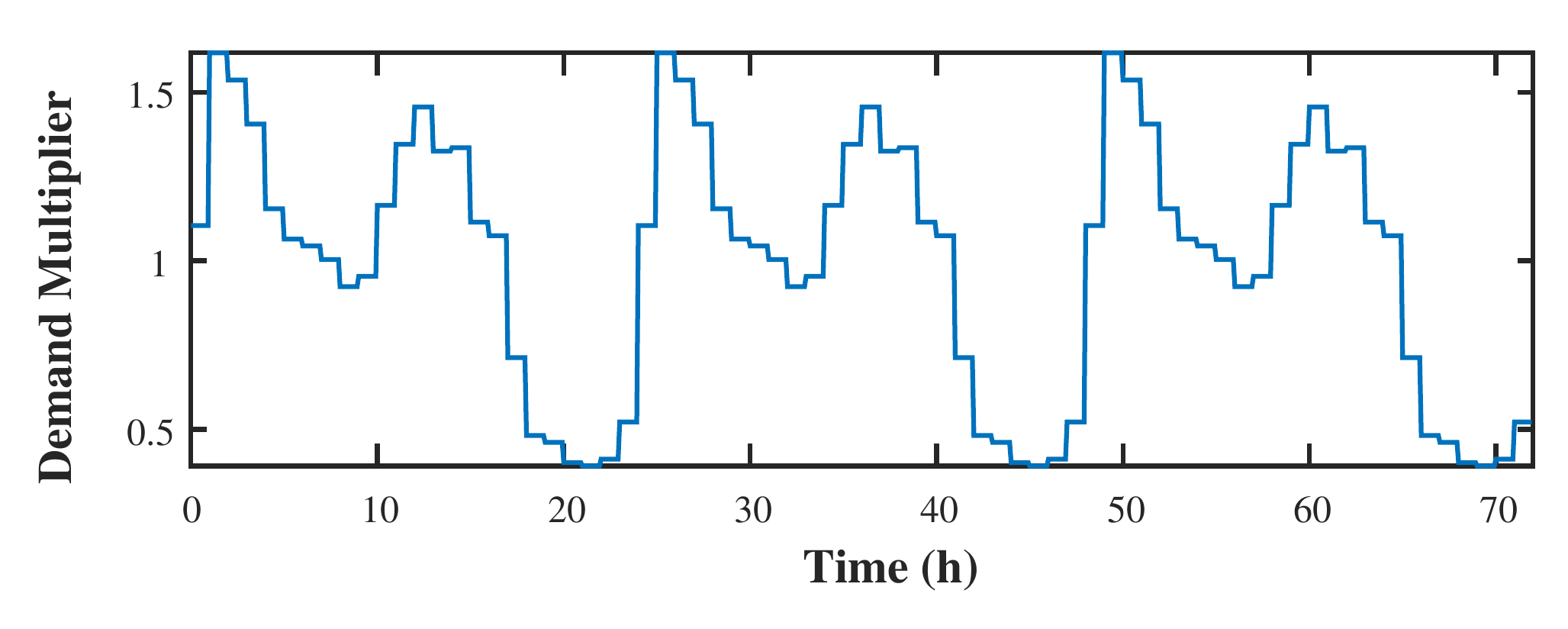}
        \caption{The demand multiplier}
        \label{fig: The demand multiplier}
    \end{subfigure}
    \caption{Electricity price and the demand multiplier}
    \label{fig:main}
\end{figure}

% \begin{figure}[t]
%     \centering
%     \includegraphics[scale=0.50]{Figures/Cost.pdf}
%     \caption{A realization of electricity prices}
%     \label{fig: electricity tariff over time}
% \end{figure}

\subsection{Effect of Design Parameters on the Obtained Guarantees} \label{sec: sim guarantee for the whole region}

The probabilistic guarantee is obtained for the whole feasibility region and Corollary \ref{coro: Guarantee for solution of SEMPC} ensures that the guarantee can be used for the solution of \eqref{eq: Equivalent of SEMPC with EES}. To this end, the mapping in \eqref{eq: feasibility problem} is considered, and the number of support elements is obtained using Algorithm \ref{alg: support elements algorithm} and \eqref{eq: Finding support in P}. As the obtained number of support elements is a lower bound on the actual number; the guarantees are approximate. Note that the scenarios are changed in a moving horizon regime and the same procedure is carried out to find guarantees at each time step.

% Note that the feasibility region denoted by $\mathcal{P}$ was characterized by \eqref{eq: feasible set} where t

A comparison of the number of support elements within $\mathcal{P}$ characterized by \eqref{eq: state, input, and terminal sets} and the $\mathcal{U}^N$ box for different design parameters in Algorithm \ref{alg: support elements algorithm} is given in Table \ref{tbl: Comparison of suppurt elements with different inputs}. $k=2$, $\rho=\frac{\mu}{2}$, $N_r=3000$, and $\beta=10^{-6}$ in all of the cases. 

% \begin{figure}[t]
%     \centering
%     \includegraphics[scale=0.2]{Figures/Demand Multipliers.pdf}
%     \caption{The Demand multiplier}
%     \label{fig: The demand multiplier}
% \end{figure}

\begin{table}[h] 
\begin{center}    
\caption{Comparison of the number of support elements using various inputs in Algorithm \ref{alg: support elements algorithm}. $s_{\text{box}}$ and $s_{\mathcal{P}}$ are the number of support elements within the $\mathcal{U}^N$ box and $\mathcal{P}$.}
\label{tbl: Comparison of suppurt elements with different inputs}
\begin{tabular}{ccccccccc}
\hline
$N_s$ & $\mu$ & $ \overline{\beta}$ &$N_T$ &$s_{\text{box}}$ &$s_{\mathcal{P}}$ & $\mathunderbar{\epsilon}(s_{\mathcal{P}})$ &$\bar{\epsilon}(s_{\mathcal{P}})$ \\
\hline
2000  & $10^{-4}$ & $10^{-6}$ &733984 &45 &31 &0.005 &0.037 \\
2000  & $10^{-3}$ & $10^{-5}$ &57886 &31 &23 &0.003 &0.031 \\
10000  & $10^{-3}$ & $10^{-5}$ &57886 &39 &32 &0.001 &0.007 \\
\hline
\end{tabular}
\end{center}
\end{table}
The number of input samples in Table \ref{tbl: Comparison of suppurt elements with different inputs}, $N_T$, is computed from \eqref{eq: number of testing samples from input}. $N_T$ increased considerably, due to the reduced $\mu$ to provide a better guarantee for the obtained number of support elements as stated in Lemma \ref{lem: number of testing samples}. Since the testing stage in Algorithm \ref{alg: support elements algorithm} is easy to be carried out, increasing $N_T$ will not affect practicality of the algorithm significantly. 
% Increasing the number of scenarios of the electricity prices, i.e., $N_s$, provides tighter guarantees for the probability of violation in \eqref{eq: lower and upperbounds for risk of feasibility problem}.

\subsection{SEMPC with EES constraint}

The proposed control problem in \eqref{eq: Equivalent of SEMPC with EES} was applied to the Richmond water network with $M=2150$ and $N_s=2000$. The water level in tank A (a state) and the flow through pump A (an input) are shown in Fig. \ref{fig: x_A and u_A}. To show the effectiveness of the proposed method, the SEMPC cost and the EES are shown in Fig. \ref{fig:Avg and EES with and without EES cons} with and without the EES constraint.
% It is evident that when EES constraint was considered, the control input was modified to ensure EES is below the threshold at some time steps.  

\begin{figure}[t]
    \centering
    \includegraphics[scale=0.37]{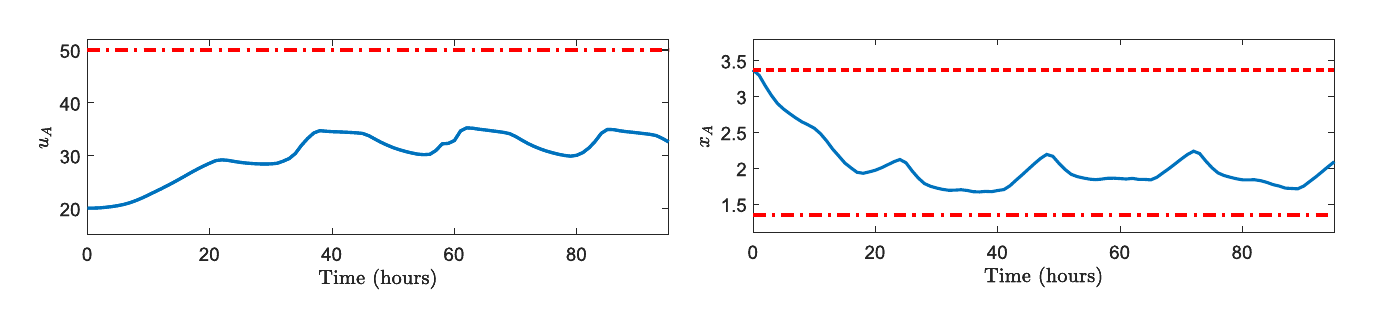}
    \caption{The simulation results for $u_A$ and $x_A$ (dashed red lines indicate constraints)}
    \label{fig: x_A and u_A}
\end{figure}

It can be seen from these figures that the EES constraint ensures that risk is below the desired value at each time step; however, it resulted in a higher SEMPC cost. The EES constraint resulted in high values for the other terms in the SEMPC cost, such as input smoothness term. Hence, some peaks with large values are visible in Fig. \ref{fig:Avg and EES with and without EES cons}. 
\begin{figure}
    \centering
    \includegraphics[scale=0.5]{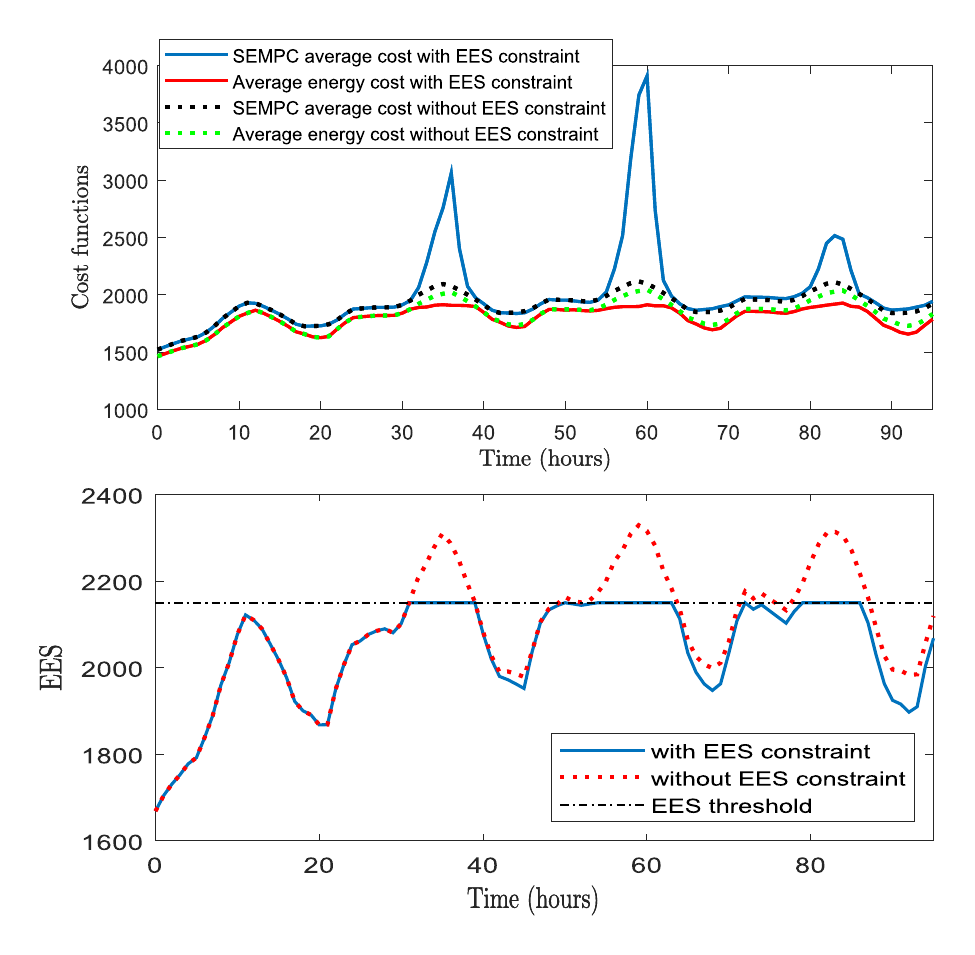}
    \caption{Average cost and EES with and without the EES constraint}
    \label{fig:Avg and EES with and without EES cons}
\end{figure}
% \begin{figure}[t]
%     \centering
%     % First subfigure
%     \begin{subfigure}[h]{0.33\textwidth}
%         \centering
%         \includegraphics[width=\textwidth, height=2.5 cm]{Figures/EES_Comparison_02.pdf}
%         \caption{EES with and without the EES constraint}
%         \label{fig: Comparison of EES}
%     \end{subfigure}
%     % Second subfigure
%     \begin{subfigure}[h]{0.33\textwidth}
%         \centering
%         \includegraphics[width=\textwidth, height=2.5 cm]{Figures/All_Costs_Together.pdf}
%         \caption{SEMPC average cost and average energy cost}
%         \label{fig: SEMPC cost}
%     \end{subfigure}
%     \caption{Evaluated EES and average cost}
%     \label{fig:main}
% \end{figure}

% \begin{figure}[h]
%     \centering
%     \includegraphics[scale=0.23]{Figures/EES_Comparison_02.pdf}
%     \caption{EES with and without considering EES constraint}
%     \label{fig: Comparison of EES}
% \end{figure}

% \begin{figure}[h]
%     \centering
%     \includegraphics[scale=0.3]{Figures/All_Costs_Together.pdf}
%     \caption{SEMPC average cost and average energy cost }
%     \label{fig: SEMPC cost}
% \end{figure} 

Using the number of support scenarios found by Algorithm \ref{alg: support elements algorithm}, $\bar{\epsilon}(s_{\mathcal{P}})$ at the SEMPC solution was between 0.025 and 0.045 at each time where the values $k=2$, $\rho=\frac{\mu}{2}$, $N_r=3000$, $\mu=10^{-3}$, $\bar{\beta}=10^{-5}$, and $\beta=10^{-6}$ were used. The guarantee at each time step was obtained from Corollary \ref{coro: Guarantee for solution of SEMPC}.   

% \begin{figure}[t]
%     \centering
%     \includegraphics[scale=0.35]{Figures/Guarantee_SEMPC.pdf}
%     \caption{Upper bounds for probability of violation during the simulation interval}
%     \label{fig: upper bound epsilon SEMPC}
% \end{figure}
\section{Conclusion}
The paper introduces an SEMPC strategy that integrates an EES constraint to effectively manage risk while minimizing average cost. A probabilistic certificate is obtained for the solution of the SEMPC strategy using the number of support elements for the whole feasibility region of the control problem. The support elements are the scenarios that are among the largest $k$ costs for some input. Evaluating the cost function and ordering them for all combinations of inputs to find the number of support elements for a high dimensional system is cumbersome. Thus, a heuristic algorithm is proposed to address this challenge effectively. Moreover, the EES constraint increases the computational complexity of solving the control problem considerably. A reformulation is proposed in this paper to reduce the computational complexity. 

% Future works include analyzing stability and performance of the obtained SEMPC problem with an EES constraint.

% \section*{Acknowledgment}

% Algo Carè and Marco Campi were partly supported by the PRIN 2022 project 2022RRNAEX ``The Scenario Approach for Control and Non-Convex Design'' (CUP: D53D23001440006), funded by the NextGeneration EU program (Mission 4, Component 2, Investment 1.1), and by the PRIN PNRR project  P2022NB77E “A data-driven cooperative framework for the management of distributed energy and water resources” (CUP: D53D23016100001), funded by the NextGeneration EU program (Mission 4, Component 2, Investment 1.1).

\bibliographystyle{ieeetr}

\bibliography{References}

\end{document}